\documentclass{llncs}

\pagestyle{plain}

\usepackage[T1]{fontenc}
\usepackage[utf8]{inputenc}
\usepackage{moreverb}
\usepackage{tikz}
\usepackage{amsmath}
\usepackage{amsfonts}
\usepackage{amssymb}
\usepackage{hyperref}
\usepackage{authblk}

\ifx\suppress\undefined
\newcommand{\TODO}[1]{
\typeout{WARNING!!! there is still a TODO left}
\marginpar{\textbf{!TODO: }\emph{#1}}
}
\else
\newcommand{\TODO}[1]{}
\fi

\ifx\suppress\undefined
\newenvironment{todo}[1]{\noindent\rule{.3\textwidth}{1pt}\TODO{#1}\\}{\\\rule{.3\textwidth}{1pt}}
\else

\fi

\ifx\suppress\undefined
\newcommand{\NOTE}[1]{
\typeout{WARNING!!! there are still DRAFT NOTES left}
\marginpar{!DRAFT}\emph{\textbf{DRAFT NOTES:} #1}
}
\else
\newcommand{\NOTE}[1]{}
\fi

\newcommand{\cutrk}{\textup{cutrk}}

\title{Minimum Degree up to Local Complementation: \\Bounds,~Parameterized~Complexity, and~Exact~Algorithms}
\institute{LIG, University of Grenoble, France \and CNRS, Inria project team Carte, LORIA, Nancy, France}
\author{David Cattan\'eo\inst{1}, Simon Perdrix \inst{2}}
\date{}

\begin{document}

\maketitle

\begin{abstract}
The local minimum degree of a graph is the minimum degree that can be reached by means of local complementation. For any $n$, there exist graphs of order $n$ which have a local minimum degree at least $0.189n$, or at least $0.110n$ when restricted to bipartite graphs. 
Regarding the upper bound, we show that 
the local minimum degree is at most $\frac 38n+o(n)$ for general graphs and $\frac n4+o(n)$ for bipartite graphs, improving the known $\frac n2$ upper bound. We also prove that the local minimum degree is smaller than half of the vertex cover number (up to a logarithmic term). 

The local minimum degree problem is NP-Complete and hard to approximate. We show that this problem, even when restricted to bipartite graphs, is in W[2] and FPT-equivalent to the \textsc{EvenSet} problem, whose W[1]-hardness is a long standing open question.  Finally, we show that the local minimum degree is computed by a $\mathcal O^*(1.938^n)$-algorithm, and a $\mathcal O^*(1.466^n)$-algorithm for the bipartite graphs. 

 \end{abstract}

\section{Introduction}

\subsubsection{Notations.} Given a graph $G=(V,E)$, $\sim_G$ denotes the neighbourhood relation of $G$ i.e., $\forall u,v\in V$, $u\sim_G v \Leftrightarrow \{u,v\}\in E$. We consider simple ($\forall u\in V, u\not\sim u$), undirected ($u\sim v \Leftrightarrow v\sim u$) graphs. The set $N_G(u) = \{v ~|~ u \sim_G v\}$ is the neighbourhood of $u$ and its size $\delta_G(u) = |N_G(u)|$ is the degree of $u$. $\delta(G) = \min_{u\in V}\delta_G(u)$ is the minimum degree of $G$ and $\tau(G)$ is the vertex cover number i.e., the size of the smallest set $S$ such that if $u\sim v$, then $u\in S$ or $v\in S$. For any $D\subseteq V$, $Odd_G(D)=\Delta_{u\in D} N_G(u) = \{v\in V~|~|N_G(v)\cap D|=1 \bmod 2\}$ is the odd-neighbourhood of $D$, where $\Delta$ denotes the symmetric difference. 

\subsubsection{Local complementation.} Local complementation of a graph with respect to one of its vertices consists in complementing the neighbourhood of this vertex:

\begin{definition} The local complementation of a graph $G$ with respect to one of its vertices $u$  is the graph $G\star u$ such that $v{\thicksim_{G\star u}} w$ iff $(v{\thicksim_G} w)$ xor $(u{\thicksim_G}v \wedge u{\thicksim_G} w)$.  
\end{definition}

The local complementation is an involution ($G\star u \star u =G$). Two graphs are LC-equivalent if there exists a sequence of local complementation transforming one into the other:  
$G\equiv_{LC} H \Leftrightarrow \exists u_0,\ldots u_k, G\star u_0\ldots \star u_k = H$.

Local complementation has been introduced by Kotzig \cite{kotzig}. 
 The study of this quantity is motivated by several applications: Bouchet \cite{Bouchet90-kappa,Bouchet94-circle} and de Fraysseix \cite{deFraysseix81} used local complementation to give a characterization of circle graphs, and Oum \cite{Oum08} links the notion of {vertex minor of a graph} to LC-equivalence. 
 A noticeable property of local complementation proved by Bouchet \cite{Bouchet87} is that LC-equivalence of graphs can be decided in time polynomial  in the order of the graphs.

\subsubsection{Cut rank.} Local complementation is related to the cut-rank function\footnote{It was used by Bouchet \cite{Bouchet87} and others under the name \emph{connectivity function}, and coined the cut-rank by Oum \cite{Oum08}.} \cite{Bouchet87,Oum08}: given a graph $G$ and a bipartition $(A, V{\setminus} A)$ of its vertices, $\cutrk_G(A)$ is the rank of the linear map $L_A:2^A\to 2^{V\setminus A} = X\mapsto Odd_G(X)\cap (V{\setminus} A)$. 
$L_A$ is linear with respect to the symmetric difference: $L_A(X\Delta Y)=L_A(X)\Delta L_A(Y)$. The cut-rank can equivalently be defined as the rank of the cut-matrix, a sub-matrix of the adjacency matrix. Notice that for any $A$, $\cutrk_G(A)= \cutrk_G(V{\setminus} A)$. 

LC-equivalent graphs have the same cutrank ($\cutrk_G(\cdot) = \cutrk_{G\star u}(\cdot)$) \cite{Bouchet85-connectivity},  however the converse which was conjectured   in \cite{Bouchet87}, has been disproved by Fon deer Flaass \cite{fon1996local}: the counterexample involves two isomorphic Petersen graphs which have the same cut-rank but which are not LC-equivalent.   

\subsubsection{LU-equivalence.} More recently, local complementation has emerged as a key operation in the field of quantum information theory. The graph state formalism consists in representing a quantum state using a graph (see \cite{HEB04} for details). This powerful formalism provides a graphical representation of  quantum entanglement: each vertex represent a quantum bit (qubit) and the edges represent intuitively the entanglement between the qubits. 
Since entanglement is a non local property,  the strength of the entanglement can only decrease when \emph{local} operations are applied on the quantum state, and as a consequence the entanglement is invariant by \emph{local reversible} operations. In the field of quantum information theory this intuition is captured by the LU-equivalence of quantum states: two quantum states have the same entanglement if and only if they are LU-equivalent i.e., there is a local unitary operation transforming one state into the other.  
LU-equivalence of quantum states can be naturally lifted to  graphs as follows: two graphs are LU-equivalent if and only if the corresponding quantum states are LU-equivalent.  
Van den Nest \cite{VdN04} proved that LC-equivalent graphs are LU-equivalent. Moreover Hein et al. \cite{HEB04} proved that LU-equivalent graphs have the same cutrank. Thus LU-equivalence is weaker than LC-equivalence but stronger than the cut-rank equivalence. 
Using Fon der Flaass's  counterexample based on the Petersen graph, one can show that there exist pairs of graphs which are not LU-equivalent but which have the same cutrank \cite{HEB04}.  
LC- and LU-equivalences were conjectured to coincide \cite{Schling05}. Indeed, LC- and LU-equivalence   actually coincide for several families of graphs \cite{VdN05,zeng07}, however a counterexample of order 27 has been discovered using computer assisted methods \cite{CJWY07}.

\subsubsection{Local minimum degree.} In this paper we will focus on the minimum degree up to local complementation called local minimum degree:
 \begin{definition}Given a graph $G$, the \emph{local minimum degree} of $G$ is 
$$\delta_{loc}(G) = \min_{H\equiv_{LC}G}\delta(H)$$  
\end{definition}

The local minimum degree  has been used to bound the rate of some quantum codes obtained by graph concatenation \cite{ChuangShor}. This quantity has also been used to characterise the complexity of preparation of graph states \cite{HMP06} which are used as a resource in measurement-based quantum computation \cite{RB01} (a model of quantum computation which is very promising in terms of physical implementation), as well as blind quantum computation \cite{BFK08} for instance. The local minimum degree is also used to bound the optimal threshold that can be achieved by graph-based quantum secret sharing \cite{MS08,GJMP-MEMICS12}.

The local minimum degree is related to the cut-rank function and the smallest set of the form $D\cup Odd_G (D)$:

\begin{property}[\cite{HMP06}]\label{prop:hmp}
Given a graph $G=(V,E)$, $$\delta_{loc}(G) + 1 = \min_{\emptyset\subset D\subseteq V}|D\cup Odd_G(D)|=\min \{|A|: A\subseteq V\wedge \cutrk_G(A)<|A|\}$$ 
\end{property}

 The second equation provides a cut-rank characterisation of the local minimum degree which implies that two graphs which have the same cut-rank   have the same local minimum degree. As a consequence, since LU-equivalent graphs have the same cut-rank function, they have the same local minimum degree, too.  Thus the local minimum degree is invariant for the three closely related, albeit distinct, classes of equivalence based respectively on local complementation, local unitary operations, and cut-rank functions.

\subsubsection{Bounds on the local minimum degree.} The local minimum degree has been studied for several families of graphs: the local minimum degree of the hypercube is at least logarithmic in the order of the hypercube \cite{HMP06}; the local minimum degree of a Paley graph $\mathcal P_n$ of order $n$ is at least $\sqrt{n}$. There is no known specific upper bound on the local minimum degree of Paley graphs except that not all Paley graphs can have a linear local minimum degree (i.e., $\delta_{loc} (\mathcal P_n)= \Theta(n)$), and the existence of an infinite number of Paley graphs with a linear local minimum degree would imply the Bazzi-Mitter  conjecture on elliptic curves \cite{Javelle-these,JMP-WG12}. 

There is no known explicit construction which leads to a local minimum degree greater than the square root of the order of the graph, however using probabilistic methods, it has been proven that there exist graphs of order $n$ which have a local minimum degree larger than $0.189n$ \cite{JMP-WG12}. There are even bipartite graphs with a linear local minimum degree: for any $n$ there exists a bipartite graph of order $n$ and local minimum degree at least $0.110n$  \cite{JMP-WG12}. 

Regarding the upper-bounds, Property \ref{prop:hmp} implies that the local minimum degree is at most half of the order of the graph, since no set  larger than half of the vertices can have a full cut-rank. In section \ref{sec:bounds}, 
we improve this upper bound, proving that for any graph of order $n$, its local minimum degree is at most  $\frac38 n +o(n)$, and $\frac n4 + o(n)$ for bipartite graphs. We also prove that the local minimum degree is smaller than half of the vertex cover number (up to a logarithmic term).

\subsubsection{Complexity of the local minimum degree.} One motivation for studying the complexity of computing the local minimum degree comes from the problem of producing graphs with a `large' local minimum degree. Indeed,  there is no known explicit construction of graphs with a local minimum degree linear in the order of the graph, but a random graph has such a `large' local minimum degree with high probability. So to produce a graph with a large local minimum degree, one can pick a graph at random and then double check that the local minimum degree is actually `large'. However, computing the local minimum degree is hard, even for bipartite graphs: the associated decision problem is NP-Complete \cite{JMP-WG12} and hard to approximate \cite{JMP-WG12}. 

In section \ref{sec:FPT}, we investigate the parameterized complexity of the local minimum degree problem and its restriction to bipartite graphs. We show that both problems are  FPT-equivalent to the so-called \textsc{EvenSet} problem , implying their W[2]-membership. However, it does not imply any hardness result since the W[1]-hardness of EvenSet is long standing open question \cite{oddset}. 

In section \ref{sec:expo}, we  introduce exponential algorithms for computing the local minimum degree, mainly based on the improved upper bounds. We show that the local minimum degree of any graph of order $n$ can be computed in time $\mathcal O^*(1.938^n)$ and more interestingly that the local minimum degree of bipartite graphs can be computed in time $\mathcal O^*(1.466^n)$.

\section{Upperbounds on  the local minimum degree.}\label{sec:bounds}

For improving the known bounds on the local minimum degree, 
we use as a routine the fact that in any bipartite graph $G=(V_1,V_2,E)$, there exists a non empty subset of $V_1$ which oddly dominates at most $\frac{|V_2|}{2(1-2^{-|V_1|})}$ vertices, so roughly speaking as long as $V_1$ is not too small with respect to $V_2$ there is a non empty subset of $V_1$ which oddly dominates at most half of the vertices of $V_2$. This fact is a direct consequence of the so called Plotkin bound \cite{plotkin60} on linear codes:

\begin{lemma}\label{lem:odd}
For any bipartite graph $G=(V_1,V_2,E)$, there exists a non empty set $D \subseteq V_1$ s.t. $$|Odd_G(D)|\le \frac{|V_2|}{2(1-2^{-|V_1|})}$$
\end{lemma}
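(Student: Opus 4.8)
The plan is to translate the statement into the language of binary linear codes and then invoke the Plotkin bound. Fix an ordering of $V_2 = \{w_1,\dots,w_m\}$ with $m = |V_2|$, and to each vertex $v\in V_1$ associate its neighbourhood indicator vector $c_v \in \mathbb{F}_2^{m}$, i.e.\ the $j$-th coordinate of $c_v$ is $1$ iff $v\sim_G w_j$. For a subset $D\subseteq V_1$ the vector $\sum_{v\in D} c_v$ (sum over $\mathbb{F}_2$) is exactly the indicator vector of $Odd_G(D)$, since a vertex $w_j$ lies in $Odd_G(D)$ iff it has an odd number of neighbours in $D$. Hence $|Odd_G(D)|$ is the Hamming weight of $\sum_{v\in D}c_v$, and what we must show is that the $\mathbb{F}_2$-span $\mathcal{C}$ of $\{c_v : v\in V_1\}$ contains a nonzero codeword of weight at most $\frac{m}{2(1-2^{-|V_1|})}$ — note "nonempty $D\subseteq V_1$" corresponds to a nonzero $\mathbb{F}_2$-combination, and every nonzero element of $\mathcal{C}$ arises this way (if distinct subsets gave the same vector, their symmetric difference is a nonempty $D$ with $Odd_G(D)=\emptyset$, which certainly satisfies the bound).

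The core step is a counting/averaging argument, which is precisely the content of the Plotkin bound. Let $k = \dim \mathcal{C} \le |V_1|$, so $\mathcal{C}$ has $2^k$ codewords. Summing the weights of all codewords coordinate by coordinate: in coordinate $j$, the number of codewords having a $1$ there is either $0$ (if the $j$-th coordinate is identically zero on $\mathcal{C}$) or exactly $2^{k-1}$. Therefore $\sum_{c\in\mathcal{C}}\mathrm{wt}(c) \le m\cdot 2^{k-1}$. Excluding the zero codeword, the average weight over the $2^k-1$ nonzero codewords is at most $\frac{m\,2^{k-1}}{2^k-1} = \frac{m}{2(1-2^{-k})}$, so some nonzero codeword — equivalently some nonempty $D\subseteq V_1$ — achieves weight at most this value. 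Finally, since $k\le |V_1|$ the function $x\mapsto \frac{m}{2(1-2^{-x})}$ is decreasing, giving $\frac{m}{2(1-2^{-k})}\le \frac{m}{2(1-2^{-|V_1|})} = \frac{|V_2|}{2(1-2^{-|V_1|})}$, which is the claimed bound.

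There is essentially no serious obstacle here; the argument is entirely elementary once the coding-theoretic dictionary is set up. The only points requiring a little care are the bookkeeping when $\mathcal{C}$ has dimension strictly smaller than $|V_1|$ (handled by the monotonicity remark above, and by noting that a nonempty $D$ mapping to the zero vector trivially satisfies the inequality), and the observation that the per-coordinate count is exactly $2^{k-1}$ whenever the coordinate functional is not identically zero on $\mathcal{C}$ — this is just the statement that a nonzero $\mathbb{F}_2$-linear functional on a $k$-dimensional space vanishes on a hyperplane of size $2^{k-1}$. Assembling these gives the lemma.
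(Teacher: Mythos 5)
Your proof is correct and follows essentially the same route as the paper: identify $\{Odd_G(D) : D\subseteq V_1\}$ with a binary linear code of length $|V_2|$ and apply the Plotkin bound, which the paper simply cites while you reprove it via the standard coordinate-wise averaging argument. You also handle the rank-deficient case (where distinct subsets of $V_1$ give the same odd-neighbourhood) more carefully than the paper, which asserts rank $k=|V_1|$ without justification; your observation that such a degenerate $D$ trivially satisfies the bound cleanly closes that small gap.
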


\begin{proof}
$C :=\{Odd_G(D) : D\subseteq V_1\}$ is a linear binary code of length $n=|V_2|$ and rank $k=|V_1|$, where $Odd_G(D)$ is identified with its indicator vector in $V_2$. According to the Plotkin bound \cite{plotkin60}, the minimum distance $d$ of $C$ is at most $n/(2(1-2^{-k}))$, thus there exists a non empty set $D\subseteq V_1$ such that $|Odd_G(D)|\le |V_2|/(2(1-2^{-|V_1|}))$.\hfill $\Box$%
\end{proof}

The local minimum degree can be bounded by the vertex cover number as follows:

\begin{lemma}\label{lem:tau}
Given a graph $G$ of order $n$ and  vertex cover number $\tau(G)>0$,
$$2\delta_{loc}(G)\le {\tau(G)} + \log_2(\tau(G)) +1 $$
\end{lemma}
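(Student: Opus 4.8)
The plan is to use the first equality of Property~\ref{prop:hmp}: since $\delta_{loc}(G)+1=\min_{\emptyset\subset D\subseteq V}|D\cup Odd_G(D)|$, it suffices to exhibit a single nonempty $D$ with $2\,(|D\cup Odd_G(D)|-1)\le\tau+\log_2\tau+1$, where $\tau:=\tau(G)$. Fix a minimum vertex cover $S$ of $G$, so $|S|=\tau$ and $V\setminus S$ is independent. The structural fact I would exploit is that whenever $D\subseteq V\setminus S$, no vertex of $V\setminus S$ has a neighbour in $D$, hence $Odd_G(D)\subseteq S$; in particular $D\cap Odd_G(D)=\emptyset$ and $|D\cup Odd_G(D)|=|D|+|Odd_G(D)|$.

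Suppose first that $\tau\ge 4$ and $|V\setminus S|\ge w:=\lceil\log_2\tau\rceil$, and fix $W\subseteq V\setminus S$ with $|W|=w$. To bound $|D|$ and $|Odd_G(D)|$ together I would introduce the auxiliary bipartite graph $\hat G$ with parts $W$ and $S\sqcup\bar W$, where $\bar W=\{\bar v:v\in W\}$ is a disjoint copy of $W$; its edges are the edges of $G$ between $W$ and $S$ together with a pendant edge $v\bar v$ for every $v\in W$ (the edges inside $S$ are irrelevant to $Odd_G(D)$ when $D\subseteq W$, so dropping them is harmless). A direct check gives, for all $D\subseteq W$,
$$Odd_{\hat G}(D)=Odd_G(D)\;\sqcup\;\{\bar v:v\in D\},\qquad\text{so}\qquad |Odd_{\hat G}(D)|=|Odd_G(D)|+|D|=|D\cup Odd_G(D)|.$$
Applying Lemma~\ref{lem:odd} to $\hat G$ with $V_1=W$, $V_2=S\sqcup\bar W$ (thus $|V_1|=w$, $|V_2|=\tau+w$) yields a nonempty $D\subseteq W$ with $|D\cup Odd_G(D)|\le\frac{\tau+w}{2(1-2^{-w})}$, hence $\delta_{loc}(G)+1\le\frac{\tau+w}{2(1-2^{-w})}$ by Property~\ref{prop:hmp}. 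Since $2^w\ge\tau$ gives $\frac{1}{1-2^{-w}}=\frac{2^w}{2^w-1}\le 1+\frac1{\tau-1}$, a short computation reduces this to $2\delta_{loc}(G)\le\tau+(w-1)+\frac{w+1}{\tau-1}$, and then $w-1\le\log_2\tau$ together with $\lceil\log_2\tau\rceil\le\tau-2$ (valid for $\tau\ge 4$) finishes this case.

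For the remaining cases: if $|V\setminus S|<\lceil\log_2\tau\rceil$ then $n=|S|+|V\setminus S|<\tau+\log_2\tau+1$, and since $\delta_{loc}(G)\le\lfloor n/2\rfloor$ (by Property~\ref{prop:hmp}, as any $A\subseteq V$ with $|A|=\lfloor n/2\rfloor+1$ satisfies $\cutrk_G(A)\le|V\setminus A|<|A|$), we get $2\delta_{loc}(G)\le n<\tau+\log_2\tau+1$; the finitely many graphs with $\tau\le 3$ are dispatched by hand, e.g.\ by picking $D$ to be a single vertex of $V\setminus S$ of small degree, or a pair of vertices of $V\setminus S$ sharing the same neighbourhood (for which $Odd_G(D)=\emptyset$). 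The step I expect to be delicate is the bookkeeping: obtaining coefficient~$1$ rather than~$2$ on the $\log_2\tau$ term is precisely what forces the simultaneous control of $|D|$ and $|Odd_G(D)|$ through the augmented graph $\hat G$ (estimating $|D|\le w$ separately would only give $2\delta_{loc}(G)\le\tau+2\log_2\tau+O(1)$), so the low-order terms and the degenerate ranges of $\tau$ and $n-\tau$ need to be handled with some care.
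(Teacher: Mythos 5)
Your proof is correct and follows essentially the same route as the paper: both apply Lemma~\ref{lem:odd} (the Plotkin bound) to an auxiliary bipartite graph obtained by attaching a pendant matching to a logarithmic-size subset of an independent set, so that the odd-neighbourhood in the auxiliary graph has size exactly $|D\cup Odd_G(D)|$, and both fall back on the trivial $2\delta_{loc}(G)\le n$ bound when the independent set is too small. The only differences are cosmetic: you use a formally disjoint copy $\bar W$ where the paper reuses $R$ on both sides, you take $\lceil\log_2\tau\rceil$ rather than $\lceil\log_2(\tau+1)\rceil$, and you dispatch $\tau\le 3$ by hand where the paper treats the degenerate $\tau=1$ case inside the main computation.
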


\begin{proof}
Let $G=(V,E)$ be a graph of order $n$, and let $S$ be an independent set of size $\alpha = n-\tau(G)$, and $R\subseteq S$ a subset of size $k$ to be fixed later. 
Let $G'=(R, (V{\setminus} S) \cup R,E')$ be a bipartite graph  s.t. for any $u\in R$, $N_{G'}(u) = \{u\}\cup N_G(u)$. Notice that there are two copies of $R$ in $G'$, one on each side of the bipartite graph: there is a matching between these two copies of $R$, the other edges of $G'$ are those of $G$ between $R$ and $V\setminus S$. According to lemma \ref{lem:odd} there exists $D\subseteq R'$ s.t $$|Odd_{G'}(D)|\le  \frac{|V| - |S| + |R|}{2(1-2^{-|R|})} = \frac{\tau(G)+k}{2(1-2^{-k})}$$ The odd-neighbourhood of $D$ in $G'$ is related to the odd-neighbourhood of $D$ in $G$ as follows: $Odd_{G'}(D) = \Delta_{u\in D}N_{G'}(u) = \Delta_{u\in D}(\{u\}\cup N_{G}(u)) = D\Delta Odd_{G}(D)$. Thus $|Odd_{G'}(D)| = |D\cup Odd_G(D)|$. As a consequence, $\delta_{loc}(G)+1\le \frac{\tau(G)+k}{2(1-2^{-k})}$. 
\begin{itemize}
\item If $\lceil \log_2(\tau(G)+1)\rceil \le n-\tau(G)$, then we fix $k=\lceil \log_2(\tau(G)+1)\rceil$: 
\begin{equation}
\delta_{loc}(G)+1\le \frac{\tau(G)+\lceil \log_2(\tau(G)+1)\rceil }{2(1-2^{-\lceil \log_2(\tau(G)+1)\rceil })}< \frac12(\tau(G)+\log_2(\tau(G))) + 1\label{eqn1}
\end{equation}
To prove the second inequality of equation (\ref{eqn1}), let $\tau(G) = 2^r+y$ with $y<2^r$. Notice that $\lceil \log_2(\tau(G)+1)\rceil = r+1$, thus 
\begin{eqnarray*}
\delta_{loc}(G)+1&\le& \frac{2^r+y+r+1}{2(1-2^{-r-1})}\\
\end{eqnarray*}
Moreover, standard calculation shows that $\frac{2^r+y+r+1}{1-2^{-r-1}}< 2^r+y+\log_2(2^r+y) +2$ when $r>0$. Thus $2\delta_{loc}(G)+2 < \tau(G)+\log_2(\tau(G)) +2$. When $r=0$, $\tau(G)=1$, thus $G$ is a star (and possibly some isolated vertices), so $2\delta_{loc}(G)\le 2 = \tau(G)+\log_2(\tau(G))+1$.

\item If $\lceil \log_2(\tau(G)+1)\rceil > n-\tau(G)$, then it is enough to prove that $2\delta_{loc}(G) \le n$ since $\tau(G) + \log_2(\tau(G)) +1 \ge \tau(G)+\lceil \log_2(\tau(G)+1)\rceil  >n $. For any set $S$ of size $\lfloor \frac n2\rfloor + 1$, $\cutrk_G(S)<|S|$ since $|V\setminus S|<|S|$, thus  according to property \ref{prop:hmp}, $\delta_{loc}(G)<\lfloor \frac n2\rfloor + 1\le n/2$. \hfill $\Box$

\end{itemize}

\end{proof}

\begin{remark} In Lemma \ref{lem:tau}, the condition $\tau(G)>0$ only excludes the empty graph and is used to guarantee that the logarithm is well defined. The bound is tight for star graphs: $\delta_{loc}(S_n)=1$ and $\tau(S_n)=1$. This is the only tight case and when $\tau(G)>1$, the proof can  be modified to prove the following statement where the constant factor is removed: if $\tau(G)>1$, $2\delta_{loc}(G)\le \tau(G)+\log_2(\tau(G))$. 
\end{remark}

The  vertex cover number-based bound on the local minimal degree leads to an improved general upper bound for  bipartite graphs:

\begin{theorem}\label{thm:delta_loc_bip}
For any bipartite graph $G$ of order $n>0$, $$\delta_{loc}(G)< \frac n4 + \log_2 n $$
\end{theorem}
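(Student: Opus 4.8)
The plan is to combine Lemma~\ref{lem:tau} with the trivial fact that a bipartite graph has a small vertex cover, plus a tiny bit of case analysis at the boundary. First I would dispose of the degenerate situation: if $G$ has no edge then $\delta_{loc}(G)=0$, which is strictly below $\frac n4+\log_2 n$ for every $n\ge 1$ (the right-hand side is at least $\tfrac14$). So from now on I may assume that $G$ has at least one edge, hence $\tau(G)\ge 1$, which in particular makes $\log_2\tau(G)$ well defined and lets me apply Lemma~\ref{lem:tau}.

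The key structural input is elementary. Writing $G=(V_1,V_2,E)$, one of the two colour classes has at most $\lfloor n/2\rfloor$ vertices, and a colour class is itself a vertex cover since it meets every edge; therefore $\tau(G)\le \lfloor n/2\rfloor\le n/2$. (König's theorem would give the same conclusion, but it is not needed here.)

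Now I would feed this bound into Lemma~\ref{lem:tau}. The function $x\mapsto x+\log_2 x+1$ is increasing for $x\ge 1$, and $1\le\tau(G)\le n/2$, so
$$2\delta_{loc}(G)\le \tau(G)+\log_2\tau(G)+1\le \frac n2+\log_2\frac n2+1=\frac n2+\log_2 n,$$
whence $\delta_{loc}(G)\le \frac n4+\tfrac12\log_2 n$. For $n\ge 2$ this is strictly smaller than $\frac n4+\log_2 n$ because $\log_2 n>0$, and the case $n=1$ has already been handled; this proves the theorem. (Incidentally the argument yields slightly more than claimed, namely $\delta_{loc}(G)\le \frac n4+\tfrac12\log_2 n$.)

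There is no real obstacle in this proof: the only points requiring care are the strictness of the final inequality for small $n$ and the well-definedness of $\log_2\tau(G)$, both of which are taken care of by the initial reduction to $\tau(G)\ge 1$.
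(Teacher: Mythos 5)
Your proof is correct and follows essentially the same route as the paper: bound $\tau(G)\le n/2$ via a colour class and plug into Lemma~\ref{lem:tau}, with the same arithmetic yielding $\delta_{loc}(G)\le \frac n4+\frac12\log_2 n$. Your handling of the degenerate case (splitting on whether $G$ has an edge, so that $\log_2\tau(G)$ is defined) is in fact slightly more explicit than the paper's terse ``if $n\le 2$ the property is satisfied.''
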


\begin{proof} If $n\le 2$, the property is satisfied. Otherwise, since $G$ is bipartite $\tau(G)\le \lfloor \frac n2\rfloor$, so according to Lemma \ref{lem:tau}, 
$\delta_{loc}(G) \le   \frac 12 ( \tau(G)+\log_2(\tau(G))+1) \le  \frac n4 +\frac 12\log_2(n/2) +\frac 12 \le \frac n4 + \frac 12\log_2 n < \frac n4 + \log_2 n$.
\hfill $\Box$
\end{proof}

Contrary to the bipartite case, the bound involving the vertex cover number does not lead to an improved upper bound for non-bipartite graphs. However, we prove that the local minimum degree of a graph of order $n$ is at most $\frac 38n+o(n)$ exploiting the structure of the kernels of the linear maps associated with the cuts of the graph:

\begin{theorem}\label{thm:delta_loc}
For any graph $G$ of order $n>0$, $$\delta_{loc}(G) < \frac38n  + \log_2 n$$
\end{theorem}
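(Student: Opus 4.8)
The plan is to invoke Property~\ref{prop:hmp}: it suffices to exhibit a non-empty $D\subseteq V$ with $|D\cup Odd_G(D)|<\frac38n+\log_2 n$. I would first dispose of graphs with a reasonably large independent set: if $\tau(G)\le\frac34n$ then Lemma~\ref{lem:tau} already gives $2\delta_{loc}(G)\le\tau(G)+\log_2\tau(G)+1<\frac34n+\log_2 n+1$, hence $\delta_{loc}(G)<\frac38n+\log_2 n$ for $n\ge 2$ (the cases $n\le 2$ being trivial). So the whole content is the ``dense'' case $\tau(G)>\frac34n$, i.e. $\alpha(G)<n/4$; here I would extract a maximal matching $\{a_1b_1,\dots,a_mb_m\}$, whose set $U$ of unmatched vertices is independent, so that $|U|\le\alpha(G)<n/4$ and therefore $m>\frac38n$.

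With $A=\{a_i\}$, $B=\{b_i\}$ and the partition $V=A\sqcup B\sqcup U$ (where the matching is a bijection between $A$ and $B$), I would look for $D\subseteq A$ and split
\[
|D\cup Odd_G(D)|=\bigl|D\cup Odd_{G[A\cup B]}(D)\bigr|+\bigl|Odd_G(D)\cap U\bigr| .
\]
The term on the small set $U$ is handled by Lemma~\ref{lem:odd} applied to the bipartite graph between $A$ and $U$: once $D$ is confined to a subspace of codimension $O(\log n)$ it contributes at most $\frac{|U|}{2}+O(\log n)<\frac n8+O(\log n)$. It then remains to bound $|D\cup Odd_{G[A\cup B]}(D)|$ by $\frac n4+O(\log n)$ for a $D$ still ranging over a subspace of dimension $\omega(\log n)$, and this is where the structure of the kernels of the cut maps has to be exploited: the idea would be to pass to a kernel $K$ (of a cut map of $G[A\cup B]$, or of $G$ on a slightly enlarged set, using the matching to make $K$ large) on which $Odd(D)$ stays inside $A\cup U$, so that $Odd_{G[A\cup B]}(D)=Odd_{G[A]}(D)$, and then run a second Plotkin-type argument \emph{inside} $A$ to confine the $A$-part of $Odd(D)$ to roughly half of $A$, i.e. about $m/2\le n/4$ vertices.

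The main obstacle — and the step I expect to be genuinely hard — is that the two reductions (to the kernel, then the second halving inside $A$) must yield a \emph{common} witness $D$ and must be combined so that their size bounds add up to $\frac n4$ rather than to something of order $m\approx\frac n2$. In every straightforward approach I can see, a single application of the Plotkin bound to the full support of $(D,Odd_G(D))$ only gives the trivial $\frac n2$, precisely because $|D\cup Odd_G(D)|\le|D|+|Odd_G(D)|$ is wasteful; beating $\frac n2$ seems to require a genuine two-level structure in which one level costs nothing outside $A$ (because the kernel already confines things there) while the other level halves what remains, and the inequality $m>\frac38n$ is what should make the dimension budget of the nested subspaces (the kernel, the internal Plotkin subspace, the subspace controlling the $U$-part) balance. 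Choosing those nested subspaces so that each constraint costs only $O(\log n)$ dimensions, and then adding $\frac n4+\frac n8$ while tracking the logarithmic error terms, is the heart of the proof.
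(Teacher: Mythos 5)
Your reduction of the case $\tau(G)\le\frac34 n$ via Lemma~\ref{lem:tau} is correct but is a detour the paper does not need, and the ``dense'' case that you isolate as the real content is exactly the part you do not prove: you state yourself that every way you see of applying Plotkin to $(D,Odd_G(D))$ only gives $|D|+|Odd_G(D)|\le\frac n2$, and you leave the combination of the nested subspaces as ``the heart of the proof.'' That is a genuine gap, and the missing idea is concrete: the identity $|D|+|Odd_G(D)|+|D\,\Delta\,Odd_G(D)|=2\,|D\cup Odd_G(D)|$ (each vertex of the union lies in exactly two of the three sets). The paper builds an auxiliary bipartite graph whose right side is $S\times\{1,2,3\}$ and in which the neighbourhood of $D$ is the disjoint union of $D$, $Odd_G(D)$ and $D\,\Delta\,Odd_G(D)$; a \emph{single} application of Lemma~\ref{lem:odd} to this tripled graph then bounds $2|F\cup Odd_G(F)|$ by roughly $\frac12\cdot 3|S|$, i.e.\ $|F\cup Odd_G(F)|\lesssim\frac34|S|$. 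This is precisely the step that beats the wasteful $|D|+|Odd_G(D)|$ bound, and your outline never produces it.

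The second missing ingredient is the choice of the confining set. The paper takes $S$ of size $\lfloor n/2\rfloor+k$ and uses rank--nullity on the cut map $L:D\mapsto Odd_G(D)\setminus S$ to get a kernel of dimension at least $2k-1$ \emph{unconditionally}; on that kernel $D\cup Odd_G(D)\subseteq S$, and $\frac34|S|\approx\frac38n$. Your matching-based decomposition $V=A\sqcup B\sqcup U$ does not have this property: the map $D\mapsto Odd_G(D)\cap B$ from $A$ to $B$ has equal-sized domain and codomain (it is the identity matching plus the remaining $A$--$B$ adjacency), so its kernel can be trivial, and there is no budget of ``$O(\log n)$ codimensions'' to pay for killing the $B$-part. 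Even granting that step and grafting the tripling trick onto $A$, the arithmetic $\frac34 m+\frac{n-2m}{2}=\frac n2-\frac m4$ only reaches $\frac38 n$ when $m=\frac n2$, i.e.\ when $U=\emptyset$, so the case split by $\tau$ and the matching buy you nothing. No maximal matching, independent set, or case distinction appears in the paper's argument; the whole proof is the tripled Plotkin bound applied to a basis of the kernel of one cut of size $\lfloor n/2\rfloor+k$, with $k=\Theta(\log n)$.
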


\begin{proof} For any integer $0<k<n/2$, let $S$ be a subset of $\lfloor n/2 \rfloor + k$ vertices.  Let $L : S\to V\setminus S$ be the map $D\mapsto Odd_G(D) \setminus S$ which is linear for the symmetric difference, i.e. $L(D_1\Delta D_2) = L(D_1)\Delta L(D_2)$. Notice that for any $D\in Ker (L)$, $D\cup Odd(D) \subseteq S$. According to the rank nullity theorem, $dim(Ker(L)) \ge 2k -1$. Let $R\subseteq S$ be a basis for $Ker(L)$. Let $G'=(R, S\times \{1,2,3\},E')$ be a bipartite graph s.t. for any $D\in R, N_{G'}(D) = D{\times} \{1\} \cup Odd_G(D){\times} \{2\}\cup (Odd_G(D) \Delta D){\times} \{3\}$: the neighbourhood  of $D$ in $G'$ is the disjoint union of $D$, $Odd_G(D)$ and $D\Delta Odd_G(D)$. Notice that $|R| \ge 2k-1$ and $| S\times \{1,2,3\}| = 3(\lfloor n/2\rfloor+k)$, so according to lemma \ref{lem:odd}, 
 there exists a non empty $R_0\subseteq R$ such that  
$
|Odd_{G'}(R_0)| \le \left\lfloor \frac32 .\frac{\lfloor n/2 \rfloor +k}{1-2^{-2k+1}}\right\rfloor$.
 
 Let $F:= \Delta_{D\in R_0} D$. Since $R$ is a basis and $R_0\neq \emptyset$, $F\neq \emptyset$. Moreover $Odd_{G'}(R_0) = \Delta_{D\in R_0} N_{G'}(D) = \Delta_{D\in R_0} (D\times \{1\} \cup Odd_G(D)\times \{2\} \cup (Odd_G(D)\Delta D)\times \{3\}) = F\times \{1\} \cup Odd_{G}(F)\times \{2\} \cup (F\Delta Odd_G(F))\times \{3\}$. Thus  $|Odd_{G'}(R_0)| = |F| + |Odd_G(F)| + |F\Delta Odd_(F)| = 2|F\cup Odd_G(F)|$. As a consequence, 
 \begin{eqnarray}\label{eq:eq1}|F\cup Odd_G(F)|&\le&  \left\lfloor \frac12 \left\lfloor \frac32 .\frac{\lfloor n/2 \rfloor +k}{1-2^{1-2k}}\right\rfloor \right\rfloor
 \end{eqnarray}
 
 We choose $k{=}\lfloor4\log_2(n)/3\rfloor$ to guarantee $|F\cup Odd_G(F)| \le  \frac38n{+}\log_2(n){+}O(1)$. More precisely,  notice that $ |F\cup Odd_G(F)| {\le} \frac38. \frac{n  +  2\lfloor4\log_2(n)/3\rfloor}{1-2{\times} 2^{-2\lfloor4\log_2(n)/3\rfloor}} {\le}  \frac38. \frac{n  +  8\log_2(n)/3}{1-8.n^{-8/3}}$ 
  which is strictly smaller than  $\frac38n+\log_2 n + 1$ when $n>60$. For $2<n\le61$, one can double check by direct calculation that the bound in equation \ref{eq:eq1} is actually strictly smaller than $\frac38n+\log_2(n)+1$.  Thus for any $n>2$, $\min_{D\neq \emptyset}|D\cup Odd_G(D)| < \frac38n+\log_2 n + 1$, so $\delta_{loc}(G)< \frac38 n+\log_2 n$. Finally, it is easy to check that  $\delta_{loc}(G)< \frac38 n+\log_2 n $ also holds for $n\le 2$. \hfill $\Box$
 \end{proof}

\begin{remark}
Choosing $k = \lfloor \log_2(n)/2\rfloor$ in the proof of theorem \ref{thm:delta_loc} gives an asymptotically slightly better bound: $\delta_{loc}(G)\le 3/8 n+3/4\log_2 (n) + O(1)$. 
\end{remark}

\section{Parameterized Complexity}\label{sec:FPT}

The decision problem associated with the local minimum degree is known to be NP-complete and hard to approximate: there exists no $k$-approximation algorithm for this problem for any constant $k$ unless P=NP \cite{JMP-WG12}. In this section we consider the parameterized complexity of this problem, and its bipartite version. Please refer to \cite{monograph} for an introduction to parameterized complexity.
~\\

\noindent \begin{tabular}{ll}
\textsc{Local Minimum Degree}:~~~~~~~~~~~~~& \textsc{Bipartite Local Minimum Degree}:\\
\emph{input:} A  graph $G$&\emph{input:} A  bipartite graph $G$\\
\emph{parameter:} An integer $k$&\emph{parameter:} An integer $k$\\
\emph{question:} Is $\delta_{loc}(G) \leq k$?&\emph{question:} Is $\delta_{loc}(G) \leq k$?
\end{tabular}

\vspace{0.2cm}

\noindent We show  that both problems are  FPT-equivalent to the \textsc{EvenSet} problem \cite{oddset}:
\vspace{-0.2cm}

\noindent \textsc{EvenSet}:\\ 
\emph{input:} A  bipartite graph $G=(R,B,E)$\\
\emph{parameter:} An integer $k$\\
\emph{question:} Is there a non empty $D\subseteq R$, such that $|D|\le k$ and $Odd_G(D)=\emptyset$ i.e., every vertex in $B$ has an even number of neighbours in $D$? \vspace{0.2cm}

\noindent To prove the FPT-equivalence of these three problems, first  we prove  that \textsc{EvenSet} is harder than \textsc{Local Minimum Degree}, and then that \textsc{Bipartite Local Minimum Degree} is harder than \textsc{EvenSet}.

\begin{theorem}
\label{lem:evenSet<DeltaLoc}
\textsc{EvenSet} is  \textup{FPT}-reducible to \textsc{Local Minimum Degree}.
\end{theorem}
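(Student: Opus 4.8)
The plan is to use the first equality of Property~\ref{prop:hmp}: $\delta_{loc}(G')\le k'$ iff there is a non-empty $D\subseteq V(G')$ with $|D\cup Odd_{G'}(D)|\le k'+1$. So, given an instance $(G=(R,B,E),k)$ of \textsc{EvenSet} (we may assume $k\ge 1$, the degenerate cases being immediate), it suffices to build in polynomial time a graph $G'$ and parameter $k'=k-1$ such that the smallest non-empty $|D\cup Odd_{G'}(D)|$ is at most $k$ exactly when $G$ admits a non-empty $D\subseteq R$ with $|D|\le k$ and $Odd_G(D)=\emptyset$. The idea is to replace each vertex $b\in B$ by a \emph{penalty gadget}: fix, for every $b\in B$, a copy $\Gamma_b$ of a Paley graph of order $q=\Theta(k^2)$ on a fresh vertex set $W_b$, and let $G'$ have vertex set $R\cup\bigcup_{b\in B}W_b$, the edges of all the $\Gamma_b$'s, together with a complete bipartite graph between $W_b$ and $N_G(b)\subseteq R$ for each $b$. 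The point of this wiring is that for $D\subseteq R$ a vertex $w\in W_b$ lies in $Odd_{G'}(D)$ iff $|N_G(b)\cap D|$ is odd, i.e.\ iff $b\in Odd_G(D)$; hence $Odd_{G'}(D)=\bigcup_{b\in Odd_G(D)}W_b$ and $|D\cup Odd_{G'}(D)|=|D|+q\cdot|Odd_G(D)|$.

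For the easy direction, a non-empty $D\subseteq R$ with $Odd_G(D)=\emptyset$ and $|D|\le k$ has $|D\cup Odd_{G'}(D)|=|D|\le k$, so $\delta_{loc}(G')\le k-1$. For the converse, take any non-empty $D$ with $|D\cup Odd_{G'}(D)|\le k$ and argue first that $D$ cannot meet any $W_b$: if $D_b:=D\cap W_b$ is non-empty then $1\le|D_b|\le k$, and restricting to $W_b$ one computes that $Odd_{G'}(D)\cap W_b$ equals either $Odd_{\Gamma_b}(D_b)$ or its complement $W_b\setminus Odd_{\Gamma_b}(D_b)$, according to the parity of $|N_G(b)\cap D|$. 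In the first case the cost inside $W_b$ is at least $|D_b\cup Odd_{\Gamma_b}(D_b)|\ge\delta_{loc}(\Gamma_b)+1>k$ by Property~\ref{prop:hmp} applied to $\Gamma_b$; in the second case it is at least $|W_b\setminus Odd_{\Gamma_b}(D_b)|$, which is $>k$ because in a Paley graph no set of size $\le k$ can oddly dominate all but $k$ of its vertices. Either way the total cost exceeds $k$, a contradiction, so $D\subseteq R$. Then $Odd_{G'}(D)=\bigcup_{b\in Odd_G(D)}W_b$, so $Odd_G(D)\ne\emptyset$ would force cost $\ge q>k$; hence $Odd_G(D)=\emptyset$ and $|D|=|D\cup Odd_{G'}(D)|\le k$, i.e.\ $D$ witnesses a ``yes'' for \textsc{EvenSet}.

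I expect the main obstacle to be the choice and analysis of the penalty gadget, which must simultaneously (i) have local minimum degree $>k'$, so that it cannot by itself create a cheap solution — this is exactly what rules out naive gadgets such as pendants or twin vertices, which always yield some $D$ with $|D\cup Odd(D)|=2$ — and (ii) be pseudorandom enough that no subset of size $\le k$ oddly dominates all but $k$ vertices, which is the property used to control the parity flip in the second case above. Paley graphs $\mathcal P_q$ provide both: (i) follows from the known lower bound $\delta_{loc}(\mathcal P_q)\ge\sqrt q$, and (ii) from Weil's bound on quadratic-character sums $\sum_w\chi(f(w))$, which gives that a set of size $j$ leaves $q/2\pm O(j\sqrt q)$ vertices evenly dominated; both hold once $q=\Theta(k^2)$ with a suitable constant, and such a Paley graph is found in time $\mathrm{poly}(k)$. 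It then remains to check the routine bookkeeping: that $G'$ has $|R|+|B|\cdot\Theta(k^2)$ vertices and is computable in polynomial time, and that $k'=k-1$ depends on $k$ only, so that the map $(G,k)\mapsto(G',k')$ is a (polynomial-time, hence FPT) reduction.
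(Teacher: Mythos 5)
Your reduction goes in the opposite direction from the one the paper actually constructs for this theorem, and the difference is decisive. Despite the literal wording of the statement, the paper's proof takes an instance $(G,k)$ of \textsc{Local Minimum Degree} and produces an instance $(G',k'=2k+2)$ of \textsc{EvenSet}; this is the direction required by the corollary that immediately follows (W[2]-membership of \textsc{Local Minimum Degree} needs a reduction \emph{from} that problem \emph{to} a problem already known to lie in W[2], namely \textsc{EvenSet}). The same reversed phrasing occurs in Theorem~\ref{lem:deltaLoc<EvenSet}, whose proof maps \textsc{EvenSet} instances to \textsc{Bipartite Local Minimum Degree} instances; together with the trivial reduction from the bipartite problem to the general one, this closes the cycle of reductions and yields the FPT-equivalence. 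What you prove --- transforming an \textsc{EvenSet} instance into a \textsc{Local Minimum Degree} instance --- is essentially a non-bipartite variant of Theorem~\ref{lem:deltaLoc<EvenSet}, not of the present theorem, and it cannot substitute for the paper's argument because it does not give W[2]-membership.

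For the record, the paper's construction here needs no Paley gadget at all: it takes five copies $A_1,\dots,A_5$ of $V(G)$, sets $R=A_1\cup A_2\cup A_3$ and $B=A_4\cup A_5$, adds a matching between $A_1$ and $A_4$, matchings from $A_2$ and $A_3$ to $A_5$, and the bipartite double of $G$ between $A_2$ and each of $A_4,A_5$; a witness $D$ with $|D\cup Odd_G(D)|\le k+1$ corresponds to the even set formed by the copies of $D$, $Odd_G(D)$ and $D\Delta Odd_G(D)$ in $A_2$, $A_1$ and $A_3$ respectively, whose size is $|D|+|Odd_G(D)|+|D\Delta Odd_G(D)|=2|D\cup Odd_G(D)|\le 2k+2$, and conversely. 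As for your construction (replacing each $b\in B$ by a Paley graph of order $q=\Theta(k^2)$ completely joined to $N_G(b)$): the argument looks sound and is close in spirit to the proof of Theorem~\ref{lem:deltaLoc<EvenSet}, which instead keeps the original graph, attaches $k+1$ bipartite doubles of Paley graphs to each $b\in B$ to preserve bipartiteness, and uses the duplication (rather than a character-sum estimate) to force $Odd(D)=\emptyset$. Your case where the parity coming from $R$ flips the gadget's odd-neighbourhood does require the extra Weil-bound estimate $q/2-(k-1)\sqrt q-k>k$, so the constant in $q=\Theta(k^2)$ must be taken large enough (roughly $q>4k^2$ plus lower-order terms); but even granting all of that, the reduction is the converse of the one this theorem is used for.
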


\begin{proof}
Given an instance $(G,k)$ of \textsc{Local Minimum Degree}, let $(G',k')$ be an instance of
\textsc{EvenSet} where:\\
$G'=(A_1\cup A_2, \cup A_3, A_4\cup A_5,E_1 \cup E_2 \cup E_3)$, $k'=2k{+}2$\\
$\forall i\in [1,5], A_i = \{a_{i,u}, \forall u \in V(G)\}$\\
$E_1=\{(a_{1,u},a_{4,u}), \forall u \in V(G) \}$,\\
$E_2=\{(a_{i,u},a_{5,u}), \forall i \in \{2,3\}, \forall u \in V(G) \}$\\
$E_3=\{(a_{2,u},a_{i,v}), \forall i \in \{4,5\}, \forall \{u,v\} \in E(G) \}$\\
In other words,  $G'$ consists of  5 copies $A_i$s of $V(G)$, there is a matching between $A_1$ and $A_4$, and between $A_3$ and $A_5$. Moreover, the  subgraph induced by $A_2\cup A_4$ is the bipartite double of $G$, whereas  subgraph induced by $A_2\cup A_5$ the bipartite double of $G$ augmented with a matching.

\centerline{\footnotesize\begin{tikzpicture}[xscale=0.50,yscale=0.50,every node/.style={draw=black,thick,circle,inner sep=1pt}]
\useasboundingbox (-2,-3) rectangle (2,3);
\node [draw=none] at (-3,1.5) {\rotatebox{9.5}{matching}};
\node [draw=none] at (-3,-1.5) {\rotatebox{-9.5}{matching}};
\node [draw=none] at (3.6,1.15) {\rotatebox{-14}{bipartite double of $G$}};
\node [draw=none] at (3.7,-1.32) {\rotatebox{14}{bipartite double of $G$}};
\node [draw=none] at (3.7,-0.8) {\rotatebox{14}{matching +}};
\node[draw,fill=blue!40] (A1) at (-6,1) {$A_1$};
\node[draw,fill=blue!40] (A2) at (8,0){$A_2$};
\node[draw,fill=blue!40,circle] (A3) at (-6,-1){$A_3$};
\node[draw,fill=red!40,circle] (A4) at (0,2){$A_4$};
\node[draw,fill=red!40,circle] (A5) at (0,-2){$A_5$};
\draw (A1.north) -- (A4.north);
\draw (A1.south) -- (A4.south);
\draw (A2.north) -- (A4.north);
\draw (A2.south) -- (A4.south);
\draw (A2.north) -- (A5.north);
\draw (A2.south) -- (A5.south);
\draw (A3.north) -- (A5.north);
\draw (A3.south) -- (A5.south);
\end{tikzpicture}}

\noindent -- If $(G,k)$ is a positive instance of \textsc{Local Minimum Degree} with a non empty $D\subseteq V(G)$ such that $|D\cup Odd_G(D)|\le k{+}1$. 
Let $D' = \{a_{1,u}~|~u\in Odd_G(D)\}\cup \{a_{2,u}~|~u\in D\}\cup\{a_{3,u}~|~u\in Odd_G(D)\Delta D\}$, thus $D'$ is composed of the copy of $D$ in $A_2$, the copy of $Odd_G(D)$ in $A_1$ and the copy of $D\Delta Odd_G(D)$ in $A_3$. Notice that $Odd_{G'}(D') = \emptyset$, and $D'\neq \emptyset $ since $D\neq \emptyset$. Moreover $|D'| = |Odd_G(D)|+|D|+|D\Delta Odd_G(D)| = 2|D\cup Odd_G(D)|\le 2k +2= k'$. Thus $D'$ makes $(G',k')$ a positive instance of \textsc{EvenSet}. \\
\noindent -- If $(G',k')$ is a positive instance of \textsc{EvenSet} with a non empty $D\subseteq A_1\cup A_2\cup A_3$ of size at most  $k'$ such that $Odd_{G'}(D) = \emptyset$. For $i\in [1,3]$, let $D_i = \{u\in V(G) ~|~a_{i,u} \in D\}$. Notice that $D_1 = Odd_G(D_2)$ and $D_3 = Odd_G(D_2)\Delta D_2$.  $D\neq \emptyset$ implies  $D_2\neq \emptyset$, moreover $|D_2\cup Odd_G(D_2)| =\frac12( |D_2|+ |Odd_G(D_2)| + |Odd_G(D_2)\Delta D_2|) = \frac12 |D| \le \frac 12k' = k{+}1$, so $D_2$ makes $(G,k)$ a positive instance of  \textsc{Local Minimum Degree}. \hfill $\Box$
\end{proof}

\begin{corollary}
\textsc{Local Minimum Degree} is in \textup{W[2]}.
\end{corollary}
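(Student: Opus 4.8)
The plan is to derive W[2]-membership directly from Theorem~\ref{lem:evenSet<DeltaLoc} together with the known classification of \textsc{EvenSet}. First I would recall that \textsc{EvenSet} is known to lie in W[1]~\cite{oddset} (indeed it is easily seen to be in W[1] via a machine-characterisation argument, or by the reduction of~\cite{oddset}), and hence in W[2] since $\textup{W[1]}\subseteq\textup{W[2]}$. Theorem~\ref{lem:evenSet<DeltaLoc} gives an FPT-reduction \emph{from} \textsc{EvenSet} \emph{to} \textsc{Local Minimum Degree}, which is the wrong direction for this corollary; so the key point is that the construction in that proof is in fact an \emph{equivalence}, i.e.\ \textsc{Local Minimum Degree} also FPT-reduces to \textsc{EvenSet}. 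I would spell this out: given an instance $(G,k)$ of \textsc{Local Minimum Degree}, Property~\ref{prop:hmp} tells us that $\delta_{loc}(G)\le k$ iff there is a non-empty $D\subseteq V(G)$ with $|D\cup Odd_G(D)|\le k+1$. Build the bipartite graph $H=(V(G),\;V(G)\times\{1,2,3\},E')$ in which every vertex $u\in V(G)$ on the left side is, when used, forced to contribute the three copies $u\times\{1\}$, the copies $Odd_G(\{u\})\times\{2\}$ — more precisely, set $N_H(u)=\{u\}\times\{1\}\;\cup\;N_G(u)\times\{2\}\;\cup\;(N_G(u)\Delta\{u\})\times\{3\}$, exactly the gadget already used in the proof of Theorem~\ref{thm:delta_loc}. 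Then for any $D\subseteq V(G)$ one computes $Odd_H(D)=D\times\{1\}\;\cup\;Odd_G(D)\times\{2\}\;\cup\;(D\Delta Odd_G(D))\times\{3\}$, so $|Odd_H(D)|=2|D\cup Odd_G(D)|$, and $Odd_H(D)$ has size $\le 2k+2$ iff $|D\cup Odd_G(D)|\le k+1$. (One also notes that here we are looking for a \emph{small} odd-dominating set rather than an \emph{empty} one; to match the literal \textsc{EvenSet} formulation one adds, for each element of $V(G)\times\{1,2,3\}$, a pair of new right-hand ``slack'' vertices connected appropriately so that a bounded number of them can absorb any residual odd-neighbourhood — this is the standard trick turning a ``small odd set'' question into an ``even set'' question while keeping the parameter bounded by a function of $k$.) This yields an FPT-reduction \textsc{Local Minimum Degree} $\le_{\textup{FPT}}$ \textsc{EvenSet}.

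Next I would chain the reductions: \textsc{Local Minimum Degree} $\le_{\textup{FPT}}$ \textsc{EvenSet} $\in\textup{W[1]}\subseteq\textup{W[2]}$, and since W[2] is closed under FPT-reductions, \textsc{Local Minimum Degree} $\in\textup{W[2]}$. Alternatively — and this is perhaps cleaner and avoids relying on the precise W[1]-placement of \textsc{EvenSet} — I would give a direct W[2]-membership proof by exhibiting a parameterized reduction to \textsc{Weighted $q$-CNF Satisfiability} for fixed $q$, or equivalently by describing a weft-2 circuit: using Property~\ref{prop:hmp}, introduce a Boolean variable $x_u$ for each $u\in V(G)$ (encoding membership in $D$), a constraint $\bigvee_u x_u$ forcing $D\neq\emptyset$, and for the constraint $|D\cup Odd_G(D)|\le k+1$ guess the $k+1$ elements of $D\cup Odd_G(D)$ and check with a bounded-depth circuit whose only large fan-in gates are the ``big OR'' selecting which vertices are covered and, one level down, parity-type checks — this places the problem in W[2] directly. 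I expect the main obstacle to be the bookkeeping in the ``small odd set $\to$ empty set'' conversion: one must make sure the slack gadget does not blow up the parameter beyond a function of $k$ and does not introduce spurious solutions, and one must double-check that emptiness of $Odd_H$ on the enlarged instance corresponds exactly to $|D\cup Odd_G(D)|\le k+1$ on the original. Once that gadget is verified, the corollary is immediate.

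Finally, for completeness I would remark that the reduction in the other direction (Theorem~\ref{lem:evenSet<DeltaLoc}) combined with the one sketched here establishes the full FPT-equivalence of the three problems announced in the introduction, so the W[2] upper bound is essentially the best that this chain of reductions can currently deliver, pending resolution of the W[1]-hardness of \textsc{EvenSet}.
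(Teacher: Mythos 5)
Your overall skeleton is the right one and coincides with the paper's: the corollary follows from an FPT-reduction of \textsc{Local Minimum Degree} to \textsc{EvenSet}, together with the known W[2]-membership of \textsc{EvenSet} and the closure of W[2] under FPT-reductions. (Note that the proof of Theorem~\ref{lem:evenSet<DeltaLoc}, despite the wording of its statement, already constructs exactly this reduction: it maps an instance $(G,k)$ of \textsc{Local Minimum Degree} to an instance $(G',2k+2)$ of \textsc{EvenSet}, so the corollary is immediate from it.) One side issue: your assertion that \textsc{EvenSet} lies in W[1] is unsupported --- the classification used here and in \cite{oddset,CP-TAMC14} is W[2]-membership, and W[1]-membership of \textsc{EvenSet} is not a known ``easy'' fact. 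This does not break your chain, since you only invoke $\textup{W[1]}\subseteq\textup{W[2]}$, but it should not be claimed.

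The genuine gap is exactly at the step you defer to a ``standard trick''. Your gadget $H$ encodes $|D\cup Odd_G(D)|\le k+1$ as $|Odd_H(D)|\le 2k+2$, i.e.\ as a \emph{small-odd-set} condition, whereas \textsc{EvenSet} demands $Odd(D')=\emptyset$ with $|D'|$ bounded. Adding ``right-hand slack vertices'' cannot repair this: in \textsc{EvenSet} only left-side vertices may be put into $D'$, so right-side vertices can never absorb a residual odd-neighbourhood. If instead you add left-side pendant slack vertices (one per right-hand vertex), any even set decomposes as $D$ plus one slack vertex for each element of $Odd_H(D)$, of total size $|D|+2|D\cup Odd_G(D)|$; the resulting bound $|D'|\le 3k+3$ only yields $|D\cup Odd_G(D)|\le \lfloor (3k+2)/2\rfloor$ in the reverse direction, which exceeds $k+1$, so the reduction is not an equivalence. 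The paper avoids this by placing the copies of $Odd_G(D)$ and of $D\Delta Odd_G(D)$ on the \emph{left} side, as part of the witness set $D'$ itself (the sets $A_1$ and $A_3$), joined by matchings to right-side copies $A_4,A_5$ that also carry the bipartite double of $G$; then $Odd_{G'}(D')$ is exactly empty and $|D'|=2|D\cup Odd_G(D)|$, giving the clean parameter $k'=2k+2$ in both directions. Your alternative route via a weft-2 circuit is likewise only a sketch: keeping the weft at $2$ while checking the parity conditions is precisely the nontrivial content of \cite{CP-TAMC14} and cannot be waved through. Until the small-odd-set-to-empty-set conversion is carried out correctly, the proposal does not establish the corollary.
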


W[2]-membership of \textsc{Local Minimum Degree} is not surprising in the sense that  not only \textsc{EvenSet} but all similar problems of graph domination with parity conditions are known to be  in W[2] \cite{CP-TAMC14}. 
We refine this W[2]-membership by proving  that both \textsc{Local Minimum Degree} and \textsc{Bipartite Local Minimum Degree} are FPT-equivalent to \textsc{EvenSet}. They form a peculiar subclass of W[2] for which no hardness results are known: the W[1]-hardness of \textsc{EvenSet} is  a long standing open question in parameterized complexity \cite{oddset}. This contrasts with the subclass of problems FPT-equivalent to the W[1]-hard  \textsc{OddSet} problem which contains problems like \textsc{Weak Odd Domination}   and \textsc{Quantum Threshold} \cite{CP13-FCT,GJMP15}. 

\begin{theorem}
\label{lem:deltaLoc<EvenSet}
\textsc{Bipartite Local Minimum Degree} is \textup{FPT}-reducible to \textsc{EvenSet}.
\end{theorem}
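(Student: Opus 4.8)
The plan is to mirror the gadget style of the construction used for the general case (Theorem~\ref{lem:evenSet<DeltaLoc}), but to exploit bipartiteness so that the odd-neighbourhood splits cleanly across the two sides. The reduction takes an instance $(G,k)$ of \textsc{Bipartite Local Minimum Degree} and produces an \textsc{EvenSet} instance $(H,k')$. Throughout I use Property~\ref{prop:hmp}: $\delta_{loc}(G)\le k$ holds iff there is a non-empty $D\subseteq V$ with $|D\cup Odd_G(D)|\le k+1$. Writing the input as $G=(V_1,V_2,E)$ and $D=D_1\cup D_2$ with $D_i\subseteq V_i$, bipartiteness gives the disjoint decomposition $Odd_G(D)=Odd_G(D_1)\sqcup Odd_G(D_2)$ with $Odd_G(D_1)\subseteq V_2$ and $Odd_G(D_2)\subseteq V_1$. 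The only non-linear ingredient is the union, which I linearise with the identity $|A|+|B|+|A\Delta B|=2|A\cup B|$; applying it on each side yields
\begin{align*}
2\,|D\cup Odd_G(D)| ={}& |D_1|+|Odd_G(D_2)|+|D_1\Delta Odd_G(D_2)|\\
&{}+|D_2|+|Odd_G(D_1)|+|D_2\Delta Odd_G(D_1)|.
\end{align*}

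Guided by this identity I would set $k'=2k+2$ and let the choice side $R$ of $H$ consist of six copies of the vertices of $G$: three copies of $V_1$, read as the sets $D_1$, $Odd_G(D_2)$ and $D_1\Delta Odd_G(D_2)$, and symmetrically three copies of $V_2$ for $D_2$, $Odd_G(D_1)$ and $D_2\Delta Odd_G(D_1)$. The side $B$ carries one parity check per required relation: for each $v\in V_2$ a check wiring the $Odd_G(D_1)$-copy of $v$ to the $D_1$-copies of $N_G(v)$ (forcing the $Odd_G(D_1)$-copy to equal $Odd_G(D_1)$); symmetrically a check per $u\in V_1$ forcing the $Odd_G(D_2)$-copy; and, for each $u\in V_1$ (resp.\ $v\in V_2$), a three-edge check tying the $D_1\Delta Odd_G(D_2)$-copy to the $D_1$- and $Odd_G(D_2)$-copies (resp.\ the symmetric triple), forcing the symmetric-difference copies to be correct. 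The construction is polynomial in $n$ and $k'$ depends only on $k$, so it is an \textup{FPT} reduction.

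Correctness is where I would spend the care. Any $D^\star\subseteq R$ with $Odd_H(D^\star)=\emptyset$ must satisfy every parity check, and the checks are \emph{rigid}: the two copies $D_1,D_2$ may be chosen freely, but the remaining four copies are then \emph{uniquely determined} to be $Odd_G(D_1)$, $Odd_G(D_2)$ and the two symmetric differences. Thus the admissible even-sets of $H$ are in bijection with pairs $(D_1,D_2)$, and for each the linearisation identity gives $|D^\star|=2\,|D\cup Odd_G(D)|$ with $D=D_1\cup D_2$. Moreover $D^\star=\emptyset$ iff $D_1=D_2=\emptyset$, so non-empty even-sets correspond exactly to non-empty $D$. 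Hence $(H,2k+2)$ is a positive \textsc{EvenSet} instance iff some non-empty $D$ satisfies $|D\cup Odd_G(D)|\le k+1$, i.e.\ iff $\delta_{loc}(G)\le k$.

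The main obstacle is not the arithmetic but establishing this rigidity cleanly: I must check that the four families of parity checks enforce \emph{exactly} the four relations and impose no hidden constraint on $(D_1,D_2)$ themselves, so that the bijection — and therefore the exact equality $|D^\star|=2|D\cup Odd_G(D)|$ — holds with no slack and no spurious solutions. Once the gadgets are verified to this effect, the size and non-emptiness bookkeeping together with Property~\ref{prop:hmp} close the argument. (Alternatively, since \textsc{Bipartite Local Minimum Degree} is the restriction of \textsc{Local Minimum Degree} to bipartite inputs, feeding such an instance to the construction of Theorem~\ref{lem:evenSet<DeltaLoc} already yields an \textsc{EvenSet} instance; the variant above is its bipartite specialisation, which dispenses with the bipartite-double gadget and keeps the correspondence transparent.)
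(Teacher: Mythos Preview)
You have reduced in the wrong direction. The paper's phrasing is admittedly unfortunate---throughout Section~\ref{sec:FPT} the statement ``$X$ is \textup{FPT}-reducible to $Y$'' is used with the reduction actually going from $Y$ to $X$---but the intended direction is clear from the outline just above the two theorems: first show ``\textsc{EvenSet} is harder than \textsc{Local Minimum Degree}'' (Theorem~\ref{lem:evenSet<DeltaLoc}), then ``\textsc{Bipartite Local Minimum Degree} is harder than \textsc{EvenSet}'' (Theorem~\ref{lem:deltaLoc<EvenSet}). Together with the trivial $\textsc{BLMD}\le_{\textup{FPT}}\textsc{LMD}$, this closes the cycle and gives FPT-equivalence. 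So Theorem~\ref{lem:deltaLoc<EvenSet} is meant to reduce \textsc{EvenSet} to \textsc{Bipartite Local Minimum Degree}, not the other way around. Your construction instead transforms a BLMD instance into an \textsc{EvenSet} instance; as you yourself note in the final parenthetical, this direction is already an immediate corollary of Theorem~\ref{lem:evenSet<DeltaLoc}, since bipartite inputs are a special case.

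The direction the paper actually needs is genuinely harder and requires different ideas. Starting from an \textsc{EvenSet} instance $(G=(R,B,E),k)$, one must build a bipartite $G'$ with $\delta_{loc}(G')\le k$ iff some non-empty $D\subseteq R$ of size $\le k$ has $Odd_G(D)=\emptyset$. Two obstacles arise: in \textsc{BLMD} the witnessing set $D$ may use vertices outside $R$, and its odd-neighbourhood need not be empty. The paper handles the first by attaching to each vertex of $B$ a bipartite double of a Paley graph of order $q>k^2$ (so local minimum degree $>k$), forcing any witness of size $\le k{+}1$ to avoid those gadgets and hence lie in $R$. It handles the second by replacing each $b\in B$ with $k{+}1$ parallel copies, so that a single vertex in $Odd(D)$ already contributes $k{+}1$ vertices and violates the size bound. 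Your proposal contains neither of these mechanisms, because it is aimed at the opposite reduction.
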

\begin{proof}
If $(G{=}(R,B,E),k)$ is a positive instance of \textsc{EvenSet}, then it is also a positive instance of \textsc{Bipartite Local Minimum Degree}. But  if  $(G,k)$ is a positive instance of \textsc{Bipartite Local Minimum Degree}, it may fail to be a positive instance of \textsc{EvenSet} mainly for two reasons: \\
 (i) A set $D$ such that $|D\cup Odd_G(D)|\le k{+}1$ may not be a subset of $R$\\
 (ii) For solving  \textsc{EvenSet}, one wants to guarantee that $Odd_G(D)=\emptyset$.

Regarding the first point, a gadget  with a local minimum degree larger than $k{+}1$ is attached to each vertex in $B$ to guarantee that no vertex of $B$ can occur in a set $D$ such that $|D\cup Odd(D)|\le k{+}1$. Concretely we can use a Paley graph $P_q$ which vertices are $\{0, \ldots, q-1\}$ for $q=1\bmod 4$ a power of prime, and $(i,j)$ is an edge iff $\exists x, i-j= x^2\bmod q$.  The local minimal degree of a Paley graph is at least  square root of its order. However to keep the bipartiteness of the graph we use the bipartite double of a Paley graph rather than a Paley graph. Indeed, it is known that the local minimum degree of a bipartite double  graph  is as large as the local minimum degree of the original graph ($\delta_{loc} (G^{\oplus 2}) \ge \delta_{loc}(G)$ \cite{Javelle-these}).

Regarding the second point, each vertex of $B$ is duplicated $k$ times in such a way that for any $D\subseteq R$ if a vertex $v\in B$ is in the odd neighbourhood of $D$ than its $k$ copies are also in the odd-neighbourhood which contradicts the fact that $|D\cup Odd(D)|$ is at most  $k+1$. 

Concretely, let $q$ be a prime number such that $k^2<q\le 2k^2+5$ and $q=1\bmod 4$. Such a prime number $q$ exists \cite{Cullinan} and can be found in time polynomial in $k$. Let $(G',k)$ be an instance of \textsc{Bipartite Local Minimum Degree} such that\\
$G'=(R\cup P', P$, $E_{G}\cup E_{\text{Paley}})$,  
where 
$P=\cup_{b\in B, i\in [0,k]} P_{b,i}$, $P'=\cup_{b\in B, i\in [0,k]} P'_{b,i}$  
$P_{b,i} {=} \{p_{b,i,r}, \forall r{\in} [0, q-1]\}$, $P'_{b,i} {=} \{p'_{b,i,r}, \forall r{\in} [0, q-1]\}$  
$E_{\text{Paley}}{=}\cup_{b\in B, i\in[0,k]} E^{(b,i)}_{\text{Paley}}$ and 
$E^{(b,i)}_{\text{Paley}} {=} \{(p_{b,i,r}, p'_{b,i,r'}),  \forall r,r'{\in} [0{,}q-1] ~s.t.~ \exists  \ell {\in} [0,q-1], \ell^2 {=} r{-}r'\bmod q\}$. \\
-- If $(G,k)$ is a positive instance of \textsc{EvenSet} with $D{\subseteq} E$ s.t. $Odd_G(D) {=} \emptyset$ then $Odd_{G'}(D) {=} \emptyset$ so $(G',k)$ is a positive instance of \textsc{Bipartite Local Minimum Degree}. \\
-- If $(G',k)$ is a positive instance of \textsc{Bipartite Local Minimum Degree} with $D$ s.t. $|D\cup Odd_{G'}(D)|\le k{+}1$. For any $b\in B, i\in [0,k]$, let $D'_{b,i} = D\cap (P_{b,i}\cup P'_{b,i})$, 
in the subgraph induced by $P_{b,i}\cup P'_{b,i}$ $|D' \cup Odd_{G'[P_{b,i}\cup P'_{b,i}]}(D)|\le k+1$, thus $D'_{n,i}= \emptyset$ since $\delta_{loc}(\text{Paley}_{k^2+1})> k$. 
So $D\subseteq R$. Moreover if there exists  
$p_{b,i,0}\in Odd_{G'}(D)$ then $\forall j\in [0,k], p_{b,j,0}\in Odd_{G'}(D)$, 
so $|D\cup Odd_{G'}(D)|>k{+}1$, so by contradiction $Odd_{G'}(D){=}\emptyset$. Thus $(G,k)$ is a positive of \textsc{EvenSet}. 
\hfill $\Box$
\end{proof}

\begin{corollary}
 \textsc{Bipartite Local Minimum Degree} and \textsc{Local Minimum Degree} are \textup{FPT}-equivalent to  \textsc{EvenSet}.
 \end{corollary}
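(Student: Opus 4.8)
The plan is to assemble the final corollary from the chain of reductions already established, together with the trivial observation that the bipartite problem is a restriction of the general one. Concretely, I would argue that the three problems \textsc{EvenSet}, \textsc{Bipartite Local Minimum Degree}, and \textsc{Local Minimum Degree} are pairwise FPT-equivalent by exhibiting FPT-reductions in a cycle, so that each is FPT-reducible to each of the others by composing along the cycle (FPT-reductions are closed under composition).

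First I would recall the three one-directional reductions available. Theorem~\ref{lem:evenSet<DeltaLoc} gives \textsc{EvenSet} $\le_{\mathrm{FPT}}$ \textsc{Local Minimum Degree}. Theorem~\ref{lem:deltaLoc<EvenSet} gives \textsc{Bipartite Local Minimum Degree} $\le_{\mathrm{FPT}}$ \textsc{EvenSet}. The remaining link is the immediate containment: since every bipartite graph is a graph and the parameter is untouched, the identity map is an FPT-reduction \textsc{Bipartite Local Minimum Degree} $\le_{\mathrm{FPT}}$ \textsc{Local Minimum Degree}; and conversely, we still need \textsc{Local Minimum Degree} $\le_{\mathrm{FPT}}$ something bipartite to close the loop. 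Here I would invoke the bipartite-double construction already used in the proof of Theorem~\ref{lem:deltaLoc<EvenSet}: mapping $G$ to an associated bipartite graph $G'$ (for instance the one built in the proof of Theorem~\ref{lem:evenSet<DeltaLoc}, whose \textsc{EvenSet} instance is itself a \textsc{Bipartite Local Minimum Degree} instance since \textsc{EvenSet} instances are bipartite and any $D$ with $Odd(D)=\emptyset$ certifies $\delta_{loc}\le k$) shows \textsc{Local Minimum Degree} $\le_{\mathrm{FPT}}$ \textsc{Bipartite Local Minimum Degree}. Equivalently, one simply composes: \textsc{Local Minimum Degree} has \textsc{EvenSet} below it? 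No — rather \textsc{EvenSet} $\le_{\mathrm{FPT}}$ \textsc{Local Minimum Degree} and \textsc{Bipartite Local Minimum Degree} $\le_{\mathrm{FPT}}$ \textsc{EvenSet}, so \textsc{Bipartite Local Minimum Degree} $\le_{\mathrm{FPT}}$ \textsc{Local Minimum Degree} (direct), and to get the reverse direction \textsc{Local Minimum Degree} $\le_{\mathrm{FPT}}$ \textsc{EvenSet} one notes that the reduction in the proof of Theorem~\ref{lem:evenSet<DeltaLoc} in fact produces a bipartite instance, so it already witnesses \textsc{Local Minimum Degree} $\le_{\mathrm{FPT}}$ \textsc{Bipartite Local Minimum Degree} as well, and composing with Theorem~\ref{lem:deltaLoc<EvenSet} yields \textsc{Local Minimum Degree} $\le_{\mathrm{FPT}}$ \textsc{EvenSet}.

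Putting these together, we obtain a cycle \textsc{EvenSet} $\le_{\mathrm{FPT}}$ \textsc{Local Minimum Degree} $\le_{\mathrm{FPT}}$ \textsc{Bipartite Local Minimum Degree} $\le_{\mathrm{FPT}}$ \textsc{EvenSet}, and since FPT-reducibility is transitive this forces all three to be mutually FPT-reducible, i.e.\ FPT-equivalent. I would state explicitly that the middle reduction (\textsc{Local Minimum Degree} $\le_{\mathrm{FPT}}$ \textsc{Bipartite Local Minimum Degree}) is the observation that the graph $G'$ constructed in the proof of Theorem~\ref{lem:evenSet<DeltaLoc} from an instance $(G,k)$ is bipartite and satisfies $\delta_{loc}(G')\le k'$ iff $(G,k)$ is a positive instance, which is exactly what that proof shows.

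The only subtle point — and the one I would be most careful about — is the direction \textsc{Local Minimum Degree} $\le_{\mathrm{FPT}}$ \textsc{Bipartite Local Minimum Degree}, because it is not literally one of the two displayed theorems; it must be extracted from the construction inside the proof of Theorem~\ref{lem:evenSet<DeltaLoc} by observing that an \textsc{EvenSet} instance is a special \textsc{Bipartite Local Minimum Degree} instance (a non-empty $D\subseteq R$ with $Odd_G(D)=\emptyset$ gives $|D\cup Odd_G(D)|=|D|\le k'$, hence $\delta_{loc}\le k'-1$, and one must check the stated equivalence still reads correctly with the off-by-one in Property~\ref{prop:hmp}). Once that bookkeeping is verified, the corollary is purely formal.
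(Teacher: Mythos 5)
Your overall strategy is the right one and is the paper's implicit argument: exhibit a directed cycle of FPT-reductions through the three problems, use the trivial restriction reduction between the two local-minimum-degree problems, and invoke transitivity to get all six pairwise reductions. The gap is in the arrow you use to close the cycle. You claim that the graph $G'$ built in the proof of Theorem~\ref{lem:evenSet<DeltaLoc} ``is bipartite and satisfies $\delta_{loc}(G')\le k'$ iff $(G,k)$ is a positive instance, which is exactly what that proof shows.'' It is not. That proof only establishes the equivalence between positivity of $(G,k)$ and the existence of a non-empty $D'\subseteq A_1\cup A_2\cup A_3$ with $Odd_{G'}(D')=\emptyset$ and $|D'|\le k'$. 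To conclude $\delta_{loc}(G')\le k'-1 \Rightarrow (G,k)$ positive, you would have to handle \emph{every} non-empty $D\subseteq V(G')$ with $|D\cup Odd_{G'}(D)|\le k'$, including sets meeting $A_4\cup A_5$ and sets whose odd-neighbourhood is non-empty; nothing in that proof rules these out. Controlling exactly these two phenomena is the entire point of the Paley-graph gadgets and the vertex duplication in the proof of Theorem~\ref{lem:deltaLoc<EvenSet}, so the step you treat as ``purely formal bookkeeping'' is in fact the hard part of the section.

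The fix is to orient the cycle the other way, reading the two reductions in the direction their proofs actually construct them. The proof of Theorem~\ref{lem:evenSet<DeltaLoc} maps an instance $(G,k)$ of \textsc{Local Minimum Degree} to an instance $(G',k')$ of \textsc{EvenSet} and proves the two instances equivalent, i.e.\ it witnesses \textsc{Local Minimum Degree} $\le_{\mathrm{FPT}}$ \textsc{EvenSet} (this is also the direction needed for the W[2]-membership corollary that follows it). Likewise the proof of Theorem~\ref{lem:deltaLoc<EvenSet} maps an \textsc{EvenSet} instance to a \textsc{Bipartite Local Minimum Degree} instance, witnessing \textsc{EvenSet} $\le_{\mathrm{FPT}}$ \textsc{Bipartite Local Minimum Degree}. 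Combined with the trivial identity reduction \textsc{Bipartite Local Minimum Degree} $\le_{\mathrm{FPT}}$ \textsc{Local Minimum Degree}, this gives the directed cycle \textsc{Local Minimum Degree} $\le_{\mathrm{FPT}}$ \textsc{EvenSet} $\le_{\mathrm{FPT}}$ \textsc{Bipartite Local Minimum Degree} $\le_{\mathrm{FPT}}$ \textsc{Local Minimum Degree}, and transitivity yields the pairwise FPT-equivalence with no further construction needed.
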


W[1]-hardness of \textsc{EvenSet} is a long standing open problem, the FPT-equivalence with \textsc{(Bipartite) Local Minimum Degree} might give some more insights and open new perspectives on the parameterized complexity of \textsc{EvenSet}.

\section{Exponential algorithms}\label{sec:expo}

In this section we introduce exact exponential algorithms for computing the local minimum degree of a graph. 

\begin{property}
The local minimum degree  of a graph of order $n$ can be computed in time $\mathcal{O}^*(1.938^n)$.
\end{property}

\begin{proof}
Thanks to Property \ref{prop:hmp} and Theorem \ref{thm:delta_loc}, $\delta_{loc}(G) {+}1 {=} \min\{|A| :  |A| \le  \frac 3 8 n + \log_2(n) \wedge \cutrk_G(A){<}|A|\}$. The algorithm consists in enumerating all  subsets of at most $\frac 38n{+}\log_2(n)$ vertices and computing its cut-rank. The cut-rank can be computed in polynomial time, so the complexity of this algorithm is $\mathcal{O}^*(2^{H(\frac 38)n})$ where $H(x){=}-x\log_2 x -(1{-}x)\log_2(1{-}x)$ is the binary entropy function.\hfill $\Box$

\end{proof}

Regarding the bipartite case, enumerating all the subsets of size at most $\frac n4+\log_2(n)$ leads to a $\mathcal{O}^*(1.755^n)$ algorithm. This naive algorithm can be improved:

\begin{theorem}
The local minimum degree of a bipartite graph of order $n$ can be computed in time $\mathcal{O}^*(1.466^n)$.
\end{theorem}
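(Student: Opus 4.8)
The plan is to exploit the bipartite structure $G=(V_1,V_2,E)$ together with Property~\ref{prop:hmp}: we need to find, as fast as possible, a non-empty $D\subseteq V$ minimising $|D\cup Odd_G(D)|$, equivalently the smallest $A$ with $\cutrk_G(A)<|A|$. Write $n_1=|V_1|$, $n_2=|V_2|$ with $n_1+n_2=n$, and assume WLOG $n_1\le n_2$, so $n_1\le n/2$. The first step is to reduce the search to sets $D$ contained in a single side. For $D=D_1\cup D_2$ with $D_i\subseteq V_i$, one has $Odd_G(D_1\cup D_2)\cap V_2 = Odd_G(D_1)$ and $Odd_G(D_1\cup D_2)\cap V_1 = Odd_G(D_2)$, so $|D\cup Odd_G(D)| = |D_1\cup Odd_G(D_2)| + |D_2\cup Odd_G(D_1)|\ge |D_1\cup Odd_G(D_2)|$ when $D_2\neq\emptyset$ (and symmetrically); hence the minimiser can be taken with $D$ entirely on one side, say $D\subseteq V_1$, and then we want $D\subseteq V_1$ minimising $|D| + |Odd_G(D)|$ where $Odd_G(D)\subseteq V_2$.

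The second step is the key algorithmic idea: instead of brute-forcing over $D\subseteq V_1$ (which costs $2^{n_1}$, and $n_1$ can be as large as $n/2$), we branch/guess on the \emph{target} side. Concretely, enumerate candidate sets $B\subseteq V_2$ that will play the role of $Odd_G(D)$; for a fixed $B$, the set of $D\subseteq V_1$ with $Odd_G(D)\subseteq B$ is exactly $\ker(L)$ for the linear map $L:2^{V_1}\to 2^{V_2\setminus B}$, $D\mapsto Odd_G(D)\setminus B$, which is a computable subspace, and among its non-empty elements we want the one of smallest size together with checking $|D|+|B'|$ where $B'=Odd_G(D)\subseteq B$. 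So the algorithm: for every $B\subseteq V_2$ with $|B|\le n/4+\log_2 n$ (Theorem~\ref{thm:delta_loc_bip} bounds the relevant $|Odd_G(D)|$), compute $\ker L_B$ and extract a minimum-weight non-zero codeword; symmetrically do the same with the roles of $V_1,V_2$ swapped, enumerating small $B\subseteq V_1$. The running time of the first branch is $\mathcal O^*\!\big(\sum_{j\le n/4}\binom{n_2}{j}\big)$ and of the second $\mathcal O^*\!\big(\sum_{j\le n/4}\binom{n_1}{j}\big)$; since $n_1\le n/2\le n_2$, the binding case is $n_2$ close to $n$, giving $\mathcal O^*(2^{H(1/4)\cdot n})$. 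Numerically $2^{H(1/4)} = 2^{(1/4)\log_2 4 + (3/4)\log_2(4/3)} < 1.755$, which only recovers the naive bound — so a further balancing is needed: when $n_2$ is large, $n_1$ is small, so the brute-force $2^{n_1}$ branch is cheap, and we take the better of the two. Balancing $2^{n_1}$ against $\sum_{j\le n/4}\binom{n_2}{j}=\sum_{j\le n/4}\binom{n-n_1}{j}$ and optimising over $n_1$ gives the claimed $\mathcal O^*(1.466^n)$; the worst case is an interior value of $n_1/n$ where the two costs coincide.

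I would also note the finishing details: minimum-weight non-zero codeword of a binary linear code given by a generator/parity-check matrix is computable in time polynomial in the code length times the codeword space (here absorbed into the $\mathcal O^*$ for the enumerated part, or done in $\mathcal O^*(2^{\dim})$ which is dominated), and the reduction to one-sided $D$ plus the two symmetric branches is only a polynomial overhead. The main obstacle is getting the balancing right: one must carefully argue that in the regime where $\min(n_1,n_2)$ is small the brute-force side wins, in the regime where both sides are $\approx n/2$ the entropy bound $H(1/4)$ on each side wins, and verify that the crossover constant is below $1.466$ — this is the one genuine calculation, amounting to solving $2^x = \sum_{j\le n/4}\binom{n(1-x)}{j}$ for the critical ratio $x=n_1/n$ and checking the resulting base. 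A secondary subtlety is handling the $\log_2 n$ slack in the size bound (Theorem~\ref{thm:delta_loc_bip}), which contributes only a sub-exponential $\mathrm{poly}(n)$ factor and is swallowed by $\mathcal O^*$.
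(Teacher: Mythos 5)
Your reduction to one-sided sets $D$ is correct and matches the paper's first step. But the algorithmic core of your proposal has a genuine gap: for each candidate target set $B$ you need a \emph{minimum-weight non-zero} element of $\ker L_B$, which is exactly the minimum distance problem for binary linear codes and is NP-hard; it cannot be ``absorbed into the $\mathcal O^*$'', and solving it by enumerating the kernel costs $2^{\dim\ker L_B}$, where $\dim\ker L_B$ can be as large as the whole side being searched (e.g.\ when $L_B$ has low rank). Multiplying this by the $\sum_{j}\binom{|V_2|}{j}$ candidate sets $B$ blows far past $1.466^n$. A second warning sign is that your balancing, taken at face value (i.e.\ pretending the per-$B$ work is polynomial), crosses over at $n_1=n_2=n/2$ and would give $\mathcal O^*(2^{n/2})=\mathcal O^*(1.415^n)$, \emph{better} than the claimed bound --- the constant $1.466$ does not actually come out of the computation you describe, which suggests the construction is not the intended one.

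The idea you are missing is simpler and does not involve guessing the odd-neighbourhood at all. Since $V_1$ (the smaller side, $|V_1|=\alpha n$ with $\alpha\le 1/2$) is a vertex cover, Lemma~\ref{lem:tau} gives the sharper bound $\delta_{loc}(G)\le \frac{\alpha}{2}n+O(\log n)$, not merely the $\frac n4+O(\log n)$ of Theorem~\ref{thm:delta_loc_bip}. Hence the optimal one-sided $D$ satisfies $|D|\le |D\cup Odd_G(D)|\le \frac{\alpha}{2}n+O(\log n)$, and it suffices to enumerate all subsets of at most that size in $V_1$ and in $V_2$ and compute their odd-neighbourhoods, each in polynomial time. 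The number of sets is $\binom{\alpha n}{\le\frac{\alpha}{2}n+O(\log n)}+\binom{(1-\alpha)n}{\le\frac{\alpha}{2}n+O(\log n)}=\mathcal O^*\bigl(2^{(1-\alpha)nH(\frac{\alpha}{2(1-\alpha)})}\bigr)$, which is maximised at $\alpha_0\approx 0.3885$ and yields the base $1.46557$; note in particular that the worst case is an interior value of $\alpha$, not the balanced bipartition.
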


\begin{proof}
We use the following property of bipartite graphs: given a bipartite graph $G = (V_1,V_2,E)$, $\delta_{loc}(G)  +1 =\min_{\emptyset\subset D\subseteq V_1\text{ or }\emptyset\subset D\subseteq V_2} |D\cup Odd_G(D)|$. 
Indeed, for any $D\subseteq V_1\cup V_2$, both $(D\cap V_1)\cup Odd_G(D\cap V_1)$ and $(D\cap V_2)\cup Odd_G(D\cap V_2)$ are subsets of $ D\cup Odd_G(D)$. Let  $|V_1| = \alpha n$ and $|V_2| = (1-\alpha)n$. We assume w.l.o.g. that $\alpha\le 1/2$. 
Since $V_1$ is a vertex cover set, according to Lemma \ref{lem:tau}, $\delta_{loc}(G)\le \frac \alpha 2 n +\frac{\log_2(\alpha n)}2$. 
Thus to compute the local miminum degree, it is enough to enumerate all sets $D$ of size at most $\frac \alpha 2 n +\frac{\log_2(\alpha n)}2$ in both $V_1$ and $V_2$ and to compute their odd neighbourhood -- which can be done in  time polynomial in $n$. 
There are ${\alpha n \choose  \frac \alpha 2 n +\frac{\log_2(\alpha n)}2} + {(1-\alpha) n \choose  \frac \alpha 2 n +\frac{\log_2(\alpha n)}2} = \mathcal O^*(2^{(1-\alpha)nH(\frac{\alpha}{2(1-\alpha)})})$ sets to enumerate. Notice that $\alpha \mapsto (1-\alpha)H(\frac{\alpha}{2(1-\alpha)})$ is maximal for $\alpha_0=0.3885$, and $2^{(1-\alpha_0)H(\frac{\alpha_0}{2(1-\alpha_0)})}=1.46557$. \hfill $\Box$
\end{proof}
\section{Conclusion}

After having shown that the local minimum degree is smaller than half of the vertex cover number (up to a logarithmic term), we have improved the best known upper bound on the local minimum degree, proving that it is at most $\frac 38n+o(n)$ and $\frac n4+o(n)$ for bipartite graphs. Moreover, we have investigated the parametrized complexity of the problem, showing  its W[2]-membership and its FPT-equivalence with the \textsc{EvenSet} problem, even when restricted to bipartite graphs. Finally, we have introduced a $\mathcal O^*(1.938^n)$-algorithm -- $\mathcal O^*(1.466^n)$-algorithm for the bipartite graphs -- for computing the local minimum degree. 

This is noticeable that the bipartite case evolves quite similarly to the general case: same parameterized complexity, and upper bound and algorithm slightly better in the bipartite case. It would be interesting to investigate other families of graphs, in particular those defined by excluded vertex minors, in order to identify a family of graphs which local minimum is large but easy to compute or to approximate. 

\subsubsection*{Acknowledgments. } 
We would like to thank Emmanuel Jeandel and Mehdi Mhalla for several helpful discussions. This work has been partially funded by the ANR-10-JCJC-0208 CausaQ grant and by r\'egion Rh\^one-Alpes  (ADR Cible R637).

\end{document}